\tikzstyle{bblock} = [rectangle, rounded corners, minimum width=0.5in, minimum height = 0.2in, text centered, draw=red!70, fill=red!5]
\tikzstyle{Bblock} = [rectangle, rounded corners, minimum width=1.65in, minimum height = 0.35in, text centered, draw=red!80, dashed]
\tikzstyle{arrow} = [thick, ->, >=stealth]
\tikzstyle{line} = [thick, -, >=stealth]
\tikzset{node distance = 0.9in and 0.2in}
\newcommand{\multiline}[1]{%
  \begin{tabularx}{\dimexpr\linewidth-\ALG@thistlm}[t]{@{}X@{}}
    #1
  \end{tabularx}
}
\newcommand{\trace}{\textit{tr}}
\newcommand{\E}{\mathbb{E\,}}
\newcommand{\Cov}{\textbf{Cov}}
\newcommand{\R}{\mathbb{R}}
\newtheorem{assumption}{Assumption}
\newtheorem{lemma}{Lemma}
\newtheorem{remark}{Remark}
\newtheorem{proposition}{Proposition}
\newtheorem{corollary}{Corollary}
\newcommand{\add}[1]{{\color{black} #1}} % replace blue with black to add for real
\newcommand{\MPar}[1]{\marginpar{}}
\newcommandx{\Mihai}[2][1=]{\todo[linecolor=red,backgroundcolor=red!25,bordercolor=red,#1]{#2}}
\title{\LARGE \bf
Extended Kalman filter--Koopman operator for tractable stochastic optimal control}
\author{Mohammad S. Ramadan,\,Mihai Anitescu
\thanks{The authors are with the Mathematics and Computer Science Division, Argonne National Laboratory, Lemont, IL 60439, USA,  {\tt\footnotesize mramadan@anl.gov, anitescu@mcs.anl.gov.}}}
\begin{document}
\maketitle
\thispagestyle{empty}
\pagestyle{empty}

%%%%%%%%%%%%%%%%%%%%%%%%%%%%%%%%%%%%%%%%%%%%%%%%%%%%%%%%%%%%%%%%%%%%%%%%%%%%%%%%
\begin{abstract}
The theory of dual control was introduced more than seven decades ago. Although it has provided rich insights to the fields of control, estimation, and system identification, dual control is generally computationally prohibitive. In recent years, however, the use of Koopman operator theory for control applications has been emerging. This paper presents a new reformulation of the stochastic optimal control problem that, employing the Koopman operator, yields a standard LQR problem with the dual control as its solution. We provide a numerical example that demonstrates the effectiveness of the proposed approach compared with certainty equivalence control, when applied to systems with varying observability.
\end{abstract}

\section{Introduction}

\add{Deterministic\MPar{1.1} control theory carries the implicit assumption of complete access to the states, a condition not met in many applications \cite{aastrom2012introduction}. Stochastic optimal control (SOC) \cite{feldbaum1960dual}, on the other hand, is hindered in practice by its computational complexity, limiting its benefits for the most part to the conceptual level via the introduction of dual control, its solution. Dual control has two key qualitative properties: caution and probing \cite{tse1973wide}. Caution accounts for uncertainty when achieving safety and improving performance, and it is manifested mostly through constraint tightening in tube-based model predictive control (MPC), or satisfying constraint instances in scenario-based methods \cite{ramadan2023control}. Probing, often in conflict with caution, reflects the online experiment design and active information-gathering roles of dual control to regulate the system's uncertainty.}

%Dual control is absent in stochastic linear systems with quadratic costs, as the state error covariance is not a function of the control input. For linear systems with unknown parameters, optimal probing can be manifested through the injection of a random signal \cite{marafioti2014persistently} that persistently excites the system. In general nonlinear systems, however, random noise injection is often insufficient. Instead, effective probing requires reasoning about the future measurement program \cite{tse1973wide}, leading to a computationally prohibitive formulation.

\add{We target solving SOC of a general differentiable nonlinear system and with a quadratic cost. We choose the extended Kalman filter (eKF) as the approximator of the state uncertainty propagation, since (i) the eKF is simple and widespread in navigation \cite{kendoul2012survey}, robotics \cite{jiang2017new}, power systems \cite{ghahremani2011dynamic}, and many other fields, rendering our developments in this paper immediately beneficial to numerous applications; (ii) the finite-dimensional approximation of the state uncertainty offered by the eKF (mean vector and covariance matrix) is appealing from the computational perspective; and (iii) the quadratic cost function can be written in terms of the state of the eKF.}

\add{The\MPar{1.2} intractability of SOC persists even when employing the eKF approximation of uncertainty \cite{tse1973wide}, thus ruling out conventional solution algorithms. Analogous to deterministic optimal control, solution algorithms to SOC have roots in: Dynamic Programming (DP) \cite{bertsekas2012dynamic}, which suffers from the curse of dimensionality, and its extensions to solve SOC \cite{Bayard2008Implicit} are generally cumbersome to implement and suffer from scalability limitations; or Pontryagin's principle, resembled mostly in nonlinear MPC \cite{rawlings2017model} over the stochastic dynamics (uncertainty dynamics), which requires high-dimensional nonconvex programming procedures that can be ill-suited for online computation. Other suboptimal SOC approaches exist \cite{mesbah2018stochastic}, and typically fall within these two principles, or restricted to limited formulations, such as linear systems with Gaussian parameters as in \cite{Heirung2017Dual}.}

\add{The\MPar{1.3} aforementioned nonlinear approaches have in common that their formulations are straightforward, while the challenge is in the solution. In contrast, our approach, which leverages the Koopman operator, switches the challenge from solution finding to problem formulation. That is, upon successful identification of a suitable basis dictionary for the Koopman operator to linearize the entire uncertainty propagation, the solution of SOC is a straightforward linear quadratic regulator (LQR) problem. This discussion of where the challenge / art is located, in the formulation or solution, resembles the discussion in \cite[p.~10]{boyd2004convex}, explaining the dichotomy between convex and nonconvex programs.}

\add{Other\MPar{1.4} works, such as \cite{surana2016linear,netto2018robust}, applied the Koopman operator for data-driven Kalman filtering and observer design. Our approach differs from these works in (i) context: ours is non-autonomous, i.e., for control design, hence, the Koopman operator is a parametric map of the control; and (ii) methodology: we solve the SOC problem by linearizing the whole state of the filter, including its uncertainty propagation, which is carried out by the eKF's Riccati equation.}

\add{Different\MPar{1.5} from the typical derivation of SOC \cite{tse1973wide,mesbah2018stochastic}, which starts from the stochastic DP equation, we use the smoothing theorem \cite{resnick2019probability} in a causality-respecting fashion. We then reformulate the SOC problem so that it is amenable to the application of Koopman operator theory. We discuss the structure of the new formulation and list the assumptions required for the continuity, bilinearity, and robust stability of the resulting new dynamics. The paper concludes with a numerical example with dynamics of varying observability and signal-to-noise ratio.}

\section{Problem Formulation} \label{section: ProblemFormulation} 
Consider the dynamic system
\begin{subequations}\label{eq:stateSpace}
\begin{align}
x_{k+1}&=f(x_k, u_k) + w_k, \label{eq:stateDynamics} \\
y_k &= h(x_k,u_k) + v_k, \label{eq:outputDynamics}
\end{align}
\end{subequations}
where $x_k\in\mathbb R^{r_x}$ is the state, $u_k\in\mathbb R^{r_u}$ the control input, and $y_k \in \mathbb R^{r_y}$ the measured output. The functions $f$ and $h$ are differentiable almost everywhere in their first arguments. The exogenous disturbances $w_k\in\mathbb R^{r_x}$, $v_k\in\mathbb R^{r_y}$ are each independent and identically distributed according to a density function. They are independent from each other and from $x_0$, have zero means and have covariances $\Sigma_w$ and $\Sigma_v$, respectively. The initial state $x_0$, prior to any measurement (including $y_0$), has a bounded mean $x_{0 \mid -1}$, a bounded covariance $\Cov(x_0) = \Sigma_{0 \mid -1}$, and a density $p_0$.

\begin{assumption} \label{Assumption1}
    (i) $\Sigma_v,\,\Sigma_w$, and $\Sigma_{0 \mid -1}$ are $\succ 0$ (positive definite) and bounded. (ii)\footnote{This point is typically required in system identification and the Koopman operator approximation methods and can be implied by ``nominal robust global asymptotic stability'' or ``positive invariance with disturbances'' as defined in \cite[p.~710]{rawlings2017model}.} $w_k \in W \subset \R^{r_x}$, $u_k \in \mathbb U\subset \R^{r_u}$, such that $W$ and $\mathbb U$ yield a compact set $\mathbb X$ invariant under \eqref{eq:stateDynamics}, starting from $x_0 \in \mathbb X$. (iii) The sequence of events is as follows: at time $k$, $u_k$ is applied; then $y_k$ becomes available.
\end{assumption}

The goal is to design a feedback control policy that minimizes the following quadratic cost $J^N(X_0,U_{N-1})$ (arguments suppressed for compactness):
\begin{align}
    J^N= \frac{1}{N} \cdot \E \Bigg \{ x_N^\top Q x_N + \sum_{k=0}^{N-1} x_k^\top Q x_k + u_k ^\top R u_k \Bigg \}, \label{eq:costRaw}
\end{align}
where $U_{N-1}=\{u_0,\hdots,u_{N-1}\}$, the tuple $X_0=( x_{0 \mid -1},\Sigma_{0 \mid -1})$, and $Q\succeq0$ (positive semi-definite) and $R\succ 0$. The expectation is over the probability space $\mathbb P_0$, characterizing the random variables $(x_0, V_{N-1}, W_{N-1})$, where $V_{N-1} = \{v_0,\hdots,v_{N-1}\}$ and $W_{N-1}=\{w_0,\hdots,w_{N-1}\}$, and the corresponding product Borel $\sigma$-field.

The control input is admissible when it is a causal law. It is restricted to be a function of the accessible data up to the time step of evaluating this law. That is, according to the $3$rd point of Assumption~\ref{Assumption1}, $u_k = u_k (Z_{k-1})$, where $Z_{k-1}=\{p_0,Y_{k-1},U_{k-1}\}$ stores all the past and accessible information up to time $k$, prior to implementing $u_k$. Here $Y_{k-1} = \{y_0,\hdots,y_{k-1}\}$. For consistency, we denote the prior information by $Z_{-1}=\{p_0\}$.

%%%%%%%%%%%%%%%%%%%%%%%%%%%%%%%%%%%%%%%%%%%%%%%%
%%%%%%%%%%%%%%%%%%%%%%%%%%%%%%%%%%%%%%%%%%%%%%%%
%%%%%%%%%%%%%%%%%%%%%%%%%%%%%%%%%%%%%%%%%%%%%%%%
%%%%%%%%%%%%%%%%%%%%%%%%%%%%%%%%%%%%%%%%%%%%%%%%
%%%%%%%%%%%%%%%%%%%%%%%%%%%%%%%%%%%%%%%%%%%%%%%%
\section{Methodology}
We first show that the cost \eqref{eq:costRaw} can be rewritten in terms of the first two moments of $x_k$.
\subsection{Equivalent description to the cost}
\begin{lemma} \label{lemma:smoothing_thm}
The term $\E x_k^\top Q x_k$ can be expressed by \MPar{3.3}
\begin{align*}
    \add{\E \left \{ x_k^\top Q x_k \right \} = \E \left \{  \trace(Q \Sigma_{k \mid k-1}) +  x_{k \mid k-1}^\top Q x_{k \mid k-1} \right \}.}
\end{align*}
The expectation to the left can be expressed in its integral form as $\E\{\cdot \} = \int \cdot\, p(x_k) dx_k$ with respect to the density function $p(x_k)$ \footnote{By the Markov property, $p(x_k)=p(x_0)p(x_1\mid x_0)\hdots p(x_k,x_{k-1})$, the initial density $p(x_0)=p_0(x_0)$ is given.}, while the one to the right can be expressed as $\E \left \{ \cdot \right \} = \int \cdot\, p(Y_{k-1}) dY_{k-1}.$ The first two conditional moments are
\begin{equation}
\begin{aligned}\label{eq:filterMeanCov}
     x_{k \mid k-1} &= \E \left \{ x_k \mid Z_{k-1}\right \},\\
    \Sigma_{k \mid k-1} &= \E \left \{ [x_k-x_{k \mid k-1}][x_k-x_{k \mid k-1}]^\top \mid Z_{k-1} \right \}.
\end{aligned}
\end{equation}
(The notation $()_{k \mid j}$ strictly follows that in \cite[Ch.~3]{anderson2012optimal}.)
\end{lemma}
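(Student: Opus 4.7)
The plan is to combine the tower property of conditional expectation (the smoothing theorem cited in the introduction) with the elementary mean--covariance decomposition of a scalar quadratic form. The proof is essentially a two-step computation; the only nontrivial issue is verifying that the conditional distribution of $x_k$ given $Z_{k-1}$ is well-defined with finite second moments, which follows from Assumption~\ref{Assumption1}.

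First I would apply the smoothing theorem,
\begin{align*}
\E\{x_k^\top Q x_k\} = \E\bigl\{\E\{x_k^\top Q x_k \mid Z_{k-1}\}\bigr\}.
\end{align*}
Because the inner expression is a measurable function of $Z_{k-1}$, and under Assumption~\ref{Assumption1}(iii) causality forces $U_{k-1}$ to be a deterministic function of $Y_{k-1}$ (with $p_0$ a prior), the outer expectation reduces to exactly $\int \cdot\, p(Y_{k-1})\,dY_{k-1}$, matching the second integral representation stated in the lemma.

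Second, I would evaluate the inner conditional expectation using the standard identity
\begin{align*}
\E\{X^\top Q X\} = \trace\bigl(Q\,\Cov(X)\bigr) + (\E X)^\top Q (\E X),
\end{align*}
obtained by writing $X = (X-\E X) + \E X$, expanding the quadratic form, and using $\trace(ab^\top)=b^\top a$ together with linearity. Applied under the conditional law given $Z_{k-1}$, and plugging in the definitions in \eqref{eq:filterMeanCov}, this yields
\begin{align*}
\E\{x_k^\top Q x_k \mid Z_{k-1}\} = \trace(Q\,\Sigma_{k\mid k-1}) + x_{k\mid k-1}^\top Q\, x_{k\mid k-1},
\end{align*}
and the lemma follows by taking the outer expectation.

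The main obstacle---really the only subtle step---is establishing well-definedness and integrability for the conditional moments. This is where Assumption~\ref{Assumption1} enters: part (ii) confines $x_k$ to a compact invariant set $\mathbb{X}$, so all polynomial moments exist and dominated-convergence-type arguments are free; part (i), combined with the product Borel structure of $\mathbb{P}_0$, ensures (via Bayes' rule) a regular conditional density of $x_k$ given $Z_{k-1}$, so that $x_{k\mid k-1}$ and $\Sigma_{k\mid k-1}$ are genuine measurable functions of $Z_{k-1}$. Once these are in place, the two integral representations of $\E\{\cdot\}$ in the lemma are just two equivalent evaluations of the same unconditional expectation: directly against the marginal density of $x_k$, or against the joint density of $Y_{k-1}$ after integrating out $x_k$ conditionally.
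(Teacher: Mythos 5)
Your proposal is correct and follows essentially the same route as the paper: the law of total expectation conditioning on $Z_{k-1}$, followed by the mean--covariance decomposition of the quadratic form (your identity $\E\{X^\top Q X\} = \trace(Q\,\Cov(X)) + (\E X)^\top Q (\E X)$ is exactly the paper's expansion of $x_k = x_{k\mid k-1} + \tilde x_k$ with vanishing cross-terms and the cyclic trace property). The additional remarks on integrability and regular conditional densities are sound but do not change the argument.
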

\begin{proof}
This lemma is a direct consequence of the law of total expectation (smoothing theorem) \cite[p.~348]{resnick2019probability}, if we condition each additive term in \eqref{eq:costRaw} on its corresponding in-time $Z_{k-1}$ (respecting causality). That is,
\begin{align}
    &\hskip -5mm\E \left \{ x_k^\top Q x_k \right \} = \int x_k^\top Q x_k p(x_k)dx_k,\nonumber\\
    &= \int\int x_k^\top Q x_k p(x_k,Y_{k-1})d\,Y_{k-1} dx_k,\nonumber\\
    &= \int \left (\int x_k^\top Q x_k p(x_k \mid Y_{k-1}) dx_k \right )p(Y_{k-1}) dY_{k-1},\nonumber\\
    &=\E \left \{ \E \left \{ x_k^\top Q x_k  \mid Z_{k-1}\right\}\right \}.\label{eq:smoothing_theorem}
\end{align}

The state $x_k$ can be decomposed into $x_k =  x_{k \mid k-1} + \tilde x_k$, where $\tilde x_k$ is the estimation error. The quadratic term $x_k^\top Q x_k = ( x_{k \mid k-1} + \tilde x_k)^\top Q (x_{k \mid k-1} + \tilde x_k)$, under the conditional expectation $\E \left \{ \cdot \mid Z_{k-1} \right \}$ and, after ignoring the zero mean cross-terms, is equivalent to $ x_{k \mid k-1}^\top Q  x_{k \mid k-1} + \E \left \{ \tilde x_k Q \tilde x_k^\top \mid Z_{k-1} \right \}$. Using the cyclic property of the trace and the linearity of the expectation operator, we have $\E \left \{ \tilde x_k Q \tilde x_k^\top \mid Z_{k-1} \right \} = \trace( Q \Sigma_{k \mid k-1})$.
\end{proof}

\begin{proposition}
The cost function \eqref{eq:costRaw} can be represented by 
\begin{align}
     &J^N = \frac{1}{N} \cdot \E \Bigg [ x_{N\mid N-1}^\top Q  x_{N\mid N-1} + \trace(Q \Sigma_{N \mid N-1})  \nonumber\\
    &+ \sum_{k=0}^{N-1} \left [  x_{k\mid k-1}^\top Q  x_{k\mid k-1} + u_k ^\top R u_k + \trace(Q \Sigma_{k \mid k-1}) \right ] \Bigg ]. \label{eq:costWideSense}
\end{align}
\end{proposition}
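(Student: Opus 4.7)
The plan is to apply Lemma~\ref{lemma:smoothing_thm} term-by-term. First I would invoke linearity of expectation on the raw cost \eqref{eq:costRaw} to obtain
\begin{align*}
J^N = \tfrac{1}{N}\Big[ \E\{x_N^\top Q x_N\} &+ \sum_{k=0}^{N-1} \E\{x_k^\top Q x_k\} \\
 &+ \sum_{k=0}^{N-1} \E\{u_k^\top R u_k\}\Big],
\end{align*}
which is valid because each quadratic form is integrable under Assumption~\ref{Assumption1} (bounded covariances and compact $\mathbb{X},\mathbb{U}$).

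Second, I would apply Lemma~\ref{lemma:smoothing_thm} to each of the $N+1$ state-quadratic terms, replacing $\E\{x_k^\top Q x_k\}$ with $\E\{\trace(Q\Sigma_{k\mid k-1}) + x_{k\mid k-1}^\top Q x_{k\mid k-1}\}$ for $k=0,1,\ldots,N$. For $k=0$, the conditioning set is $Z_{-1}=\{p_0\}$, so $x_{0\mid -1}$ and $\Sigma_{0\mid -1}$ reduce to the deterministic prior data and the outer expectation acts trivially; the formula of the lemma still holds. The control-cost terms $\E\{u_k^\top R u_k\}$ are retained as-is under the outer expectation: although $u_k$ is $Z_{k-1}$-measurable, it is a functional of random data $Y_{k-1}$, so the expectation over $Z_{k-1}$ cannot in general be dropped.

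Third, I would pull the resulting terms back under a single expectation by linearity, matching the bracketed expression in \eqref{eq:costWideSense}. The one bookkeeping item is to verify that the measure underlying each outer expectation is consistent: the lemma expresses each state-quadratic term as an integral against $p(Y_{k-1})$, but because these marginals are obtained from the same joint $\mathbb{P}_0$ over $(x_0,V_{N-1},W_{N-1})$, the sum admits a common outer $\E$ over $\mathbb{P}_0$.

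I do not anticipate a substantive obstacle: the proposition is essentially a corollary of Lemma~\ref{lemma:smoothing_thm} combined with linearity of expectation. The only conceptual subtlety worth flagging is the causal conditioning structure---each term is conditioned on its in-time $Z_{k-1}$, not on $Z_{N-1}$---which is precisely what makes the rewritten cost amenable to the eKF-based reformulation pursued later in the paper.
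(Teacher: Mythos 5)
Your proposal is correct and follows essentially the same route as the paper: apply Lemma~\ref{lemma:smoothing_thm} (the smoothing theorem with in-time conditioning on $Z_{k-1}$) to each additive term and recombine under a single outer expectation, the only cosmetic difference being that the paper keeps $u_k^\top R u_k$ inside the conditional expectation (where it is $Z_{k-1}$-measurable) while you separate it out first. Your added remarks on the $k=0$ base case and the consistency of the outer measure are sound bookkeeping that the paper leaves implicit.
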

\begin{proof}
Using Lemma~\ref{lemma:smoothing_thm} on each additive term in \eqref{eq:costRaw}, we get
\begin{align*}
    J^N &= \frac{1}{N} \cdot \E \Big \{ \E \left \{x_N^\top Q x_N \mid Z_{N-1} \right \} \nonumber \\
    & \hskip +10mm+ \sum_{k=0}^{N-1} \E \left \{x_k^\top Q x_k + u_k ^\top R u_k\mid Z_{k-1} \right \} \Big \}.
\end{align*}
Then we use the derivation subsequent to \eqref{eq:smoothing_theorem} for each conditional expectation above.
\end{proof}

\subsection{Evolution of the central moments}
In addition to \eqref{eq:filterMeanCov}, let
\begin{align*}
     x_{k \mid k} &= \E \left \{ x_k \mid Z_{k-1},y_k \right \}, \\
    \Sigma_{k \mid k} &= \E \left \{ [x_k-x_{k \mid k}][x_k-x_{k \mid k}]^\top \mid Z_{k-1},y_k \right \}.
\end{align*} 
We use the eKF to propagate the central moments $x_{k \mid k-1}, \Sigma_{k \mid k-1}$ appearing in \eqref{eq:costWideSense}. At each time step, the eKF is the recursion
\begin{align}
 x_{k+1 \mid k} = f( x_{k \mid k},u_k),\quad \label{eq:eKF_state}
\Sigma_{k+1 \mid k} = F_k \Sigma_{k \mid k} F_k^\top + \Sigma_w, 
\end{align}
where
\begin{equation}
\begin{aligned} \label{eq:eKF_supportVariables}
& x_{k\mid k}= x_{k \mid k-1} + \Omega_{k} \left [y_{k}-h(x_{k \mid k-1}, u_k)\right],\\
&\Sigma_{k \mid k} = \left [I - \Omega_{k} H_{k} \right]\Sigma_{k \mid k-1},\\
 &\Omega_{k}= \Sigma_{k \mid k-1} H_{k}^\top \left [ H_{k} \Sigma_{k \mid k-1} H_{k}^\top+\Sigma_v\right]^{-1}, \\
 &F_k = \left. \frac{\partial f(x,u_k)}{\partial x} \right | _{x_{k\mid k}},\quad H_{k} = \left. \frac{\partial h(x,u_k)}{\partial x} \right | _{x_{k\mid k-1}}.
\end{aligned}
\end{equation}
\add{The\MPar{3.1} eKF extends the Kalman filter to nonlinear systems via employing a first-order approximation formed by the Jacobians $F_k$ and $H_k$, with $\Omega_k$ mimicking the Kalman gain \cite{anderson2012optimal,jazwinski2007stochastic}.} The recursion is initialized by $ x_{0 \mid -1}$, $\Sigma_{0 \mid -1}$, given in Section~\ref{section: ProblemFormulation}.

We note that the cost description in \eqref{eq:costWideSense} is exact, because the cost is quadratic. However, the central moments $ x_{k \mid k-1}$ and $\Sigma_{k \mid k-1}$ provided by the eKF above are only approximates and not exact. The reason is that, in general, $f$ and $h$ are nonlinear and the disturbances $w_k,v_k$ are not necessarily Gaussian \cite{anderson2012optimal}. 

\begin{assumption} \label{Assumption2}
    (i) The eKF estimation error $\lVert x_k - x_{k \mid k-1} \rVert_2$ is bounded, and (ii) the covariance $\Sigma_{k \mid k-1}$ \eqref{eq:eKF_state} is positive definite and bounded, for all $k$\MPar{3.0}.\footnote{The first point is not straightforward to guarantee mathematically \cite{la1995conditions} but may be justified by the estimation accuracy and the stability of the eKF in various applications. Further discussion of this condition can be found in \cite[Sec.~9.6]{jazwinski2007stochastic}. The covariance boundedness can be achieved by satisfying the uniform observability condition in \cite{reif1999stochastic}. \add{The second point guarantees that the factorization is well defined without pivoting/permutations.}}
\end{assumption}

The variables $x_{k \mid k-1},\Sigma_{k \mid k-1}$ are random since they are functions of the random observation sequence $Y_{k-1}$ (not yet available at $k=0$). The expectation in \eqref{eq:costWideSense} averages over $Y_{k-1}$. Next, we employ an important approximation to omit this expectation.

\subsection{Certainty equivalence of the information state}

In the LQG context, the separation principle \cite{aastrom2012introduction} states that optimal control can be separated into deterministic optimal control and optimal filtering or, equivalently, into LQR and LQE. However, this does not hold for general nonlinear systems. Hence, to omit the expectation in \eqref{eq:costWideSense}, we require some assumptions.

\begin{assumption} \label{Assumption3}
The measurement correction term $[y_{k} - h(x_{k\mid k-1}, u_k)]$, in \eqref{eq:eKF_supportVariables}, is independent from $\Omega_{k}$ and is a zero-mean white noise sequence.\footnote{
This term bears a passing resemblance to the innovation sequence of the Kalman filter. If the system \eqref{eq:stateSpace} is linear, the eKF reduces to the Kalman filter, and this sequence is therefore white and of zero mean \cite[Sec.~5.3]{anderson2012optimal}. In the general nonlinear case, the complete whiteness, zero mean, and independence conditions are not guaranteed but are satisfied to some extent in various applications (see \cite[Sec.~8.2]{anderson2012optimal}).
}$^,$.
\end{assumption}

This removes the need for the expectation in \eqref{eq:costWideSense} and motivates a measurement-free version of the eKF in \eqref{eq:eKF_state}. Let
\begin{align} \label{eq:CE_step}
     x^p_{k+1} =f( x^p_{k },u_k).
\end{align}
The state $x^p_{k}$ is the surrogate of both $ x_{k\mid k-1}$ and $x_{k \mid k}$ (since the measurement correction is omitted) that evolves solely through prediction. 

\begin{remark} \label{remark1}
The surrogate state $x_k^p$, defined by employing certainty equivalence in the SOC sense (CE-SOC), is deterministic. It will be used offline for learning the Koopman operator and the design of a control law. However, the original eKF state $ x_{k\mid k-1}$ will be used when this control is implemented online in closed-loop.
\end{remark}

We define the CE information state dynamics as
\begin{align}
     \pi_{k+1}
    =  T_{\pi} \left ( \pi_k
    , u_k\right ), \label{eq:system_eKF_CE}
\end{align}
where the tuple $ \pi_k = (x^p_{k}, \Sigma^p_{k \mid k-1})$, and $\Sigma^p_{k \mid k-1},\, \Omega^p_k,\, F^p_k,\, H^p_k$, are defined as their corresponding in-name variables but with $x^p_{k}$ replacing both $x_{k \mid k}$ and $x_{k \mid k-1}$, that is, with $[y_k-h(x_{k \mid k-1}, u_k)]$ set to zero. The initial conditions $ x^p_0 = x_{0 \mid -1}$ and $ \Sigma^p_{0 \mid -1} = \Sigma_{0 \mid -1}$, are given in Section~\ref{section: ProblemFormulation}.

Corresponding to \eqref{eq:costWideSense}, the CE-SOC cost is
\begin{align}
      &\hat J^N(X_0,U_{N-1}) = \frac{1}{N} \Bigg [ x_{N}^{p\top} Q x^p_{N} + \trace(Q \Sigma^p_{N \mid N-1}) + \nonumber\\
    & \sum_{k=0}^{N-1} \left [  x_{k}^{p\top} Q x^p_{k} + u_k ^\top R u_k + \trace(Q \Sigma^p_{k \mid k-1}) \right ] \Bigg ]. \label{eq:costWideSenseDeterministic} 
\end{align}
Note that all the involved variables in \eqref{eq:costWideSenseDeterministic} are deterministic, avoiding the high-dimensional integration over the many stochastic variables as in \eqref{eq:costWideSense}.

The following subsections are geared toward reformulating the dynamic system \eqref{eq:system_eKF_CE} into a linear realization and then using the Koopman operator theory \cite{korda2018linear} to put into the LQR context the computation of a feedback control law that minimizes \eqref{eq:costWideSenseDeterministic}.

\subsection{Structure of $T_\pi$}
Note that in \eqref{eq:system_eKF_CE}, $\Sigma^p_{k+1 \mid k}$ is a function of $ x^p_{k}$ and $u_k$, through $H^p_{k}$ and $F^p_k$. This makes $\Sigma^p_{k+1\mid k}$ a nonlinear function of $u_k$ (it is nonlinear in $H^p_{k}$) unless $u_k \mapsto f( x^p_k,u_k)$ is affine in $u_k$ with constant coefficients, for any $x^p_k \in \mathbb X$. To be precise, we call a function $f_0$ defined on $\mathbb X \times \mathbb U$ \textit{bilinear} if it can be written as $f_0(x,u) = f_1(x) + f_2(x) u$, $f_1$ and $f_2$ of the appropriate dimensions, and we call it \textit{bilinear with constant coefficients} if $f_2(x)=B_f$, a constant matrix.

We note that even if $f$ and $h$ are affine in $u_k$, not necessarily with constant coefficients, $\Sigma^p_{k+1 \mid k}$ is in general a nonlinear function of $u_k$, since $u_k$ still appears in the Jacobians $F^p_k$ and $H^p_k$.

\vskip 3mm
\begin{proposition}
If $f$ and $h$ in \eqref{eq:stateSpace} are bilinear in $u_k$ with constant coefficients, then: (i) the function $T_\pi$ is bilinear in $u_k$ with constant coefficients, (ii) $T_\pi$ is continuous almost everywhere in the elements of the tuple $\pi_k$, and (iii) the covariance $\Sigma^p_{k+1 \mid k}$ is a function of $x^p_{k}$ only and is independent from $u_k$.\qed
 \label{prop:bilinearity}
\end{proposition}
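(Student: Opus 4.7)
My plan is to exploit what bilinearity with constant coefficients buys us: if I write $f(x,u)=f_1(x)+B_f u$ and $h(x,u)=h_1(x)+B_h u$ for constant matrices $B_f,B_h$, then differentiating in $x$ annihilates the control-dependent pieces, leaving $F^p_k=\left.\partial f_1/\partial x\right|_{x^p_k}$ and $H^p_k=\left.\partial h_1/\partial x\right|_{x^p_k}$ as functions of $x^p_k$ alone, free of $u_k$. I expect the three claims to cascade from this single observation.

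With that in hand, I would tackle (iii) first. Tracing the measurement-free version of \eqref{eq:eKF_supportVariables}, the gain $\Omega^p_k$ is assembled from $\Sigma^p_{k \mid k-1}$, $H^p_k$, and the constant $\Sigma_v$, so it depends on $\pi_k$ only; then $\Sigma^p_{k \mid k}=[I-\Omega^p_k H^p_k]\Sigma^p_{k \mid k-1}$ and $\Sigma^p_{k+1 \mid k}=F^p_k \Sigma^p_{k \mid k} F_k^{p\top}+\Sigma_w$ inherit this property. No $u_k$ enters, which is exactly (iii).

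For (i), the mean update simplifies to $x^p_{k+1}=f_1(x^p_k)+B_f u_k$, bilinear in $u_k$ with constant coefficient $B_f$. Stacking this with (iii) yields a decomposition $T_\pi(\pi_k,u_k)=T_1(\pi_k)+T_2 u_k$ whose covariance block of $T_2$ is zero and whose mean block is $B_f$, establishing (i). For (ii), I would verify continuity almost everywhere operation by operation: $f_1$ is continuous a.e.\ by hypothesis on $f$; the Jacobians $F^p_k,H^p_k$ inherit continuity a.e.\ from the a.e.\ differentiability of $f_1,h_1$; the bracket $H^p_k \Sigma^p_{k \mid k-1} H_k^{p\top}+\Sigma_v$ is everywhere invertible because $\Sigma_v\succ 0$ by Assumption~\ref{Assumption1}(i), and matrix inversion is continuous on the set of invertible matrices; matrix sums and products are jointly continuous. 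Composing yields continuity of $T_\pi$ almost everywhere in the elements of $\pi_k$.

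The hard part is essentially bookkeeping rather than substantive mathematics: once I am convinced that the $x$-Jacobians are $u$-independent, the eKF recursion transmits that independence through the covariance branch by inspection, and the only subtlety is preserving the \emph{a.e.} qualifier correctly across a finite composition of operations. No topological or invariance argument beyond Assumption~\ref{Assumption1} is needed, because the strict positivity of $\Sigma_v$ supplies a uniform lower bound that keeps the matrix inversion well defined globally.
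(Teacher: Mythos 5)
The paper states this proposition with a terminal QED symbol and no proof environment, so there is no in-paper argument to compare against; your proof supplies the reasoning the authors evidently intend, and it is essentially correct. The load-bearing observation is exactly the right one: constant input coefficients mean $\partial f/\partial x=\partial f_1/\partial x$ and $\partial h/\partial x=\partial h_1/\partial x$, so $F^p_k$ and $H^p_k$ depend on $x^p_k$ alone; (iii) then follows by inspecting the measurement-free Riccati recursion, and (i) follows by stacking the affine mean update $x^p_{k+1}=f_1(x^p_k)+B_f u_k$ on top of the $u_k$-free covariance branch, with the invertibility of $H^p_k\Sigma^p_{k\mid k-1}H^{p\top}_k+\Sigma_v$ secured by Assumption~\ref{Assumption1}(i) together with Assumption~\ref{Assumption2}(ii). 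The one step I would not wave through is the claim that the Jacobians ``inherit continuity a.e.\ from the a.e.\ differentiability of $f_1,h_1$'': a function that is differentiable almost everywhere need not have a derivative that is continuous almost everywhere (even for everywhere-differentiable functions the derivative is only Baire class~1, with a continuity set that is a dense $G_\delta$ but possibly of measure zero), so part (ii) really needs the slightly stronger implicit reading that $f$ and $h$ are piecewise $C^1$ --- which is consistent with the paper's examples (e.g., the ELU output map) but is a hypothesis the paper never makes explicit, and a gap it sidesteps by omitting the proof altogether.
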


This proposition imposes more structure on the dynamics \eqref{eq:system_eKF_CE}, enabling the application of more specialized control algorithms requiring bilinearity in the input, and allowing the discussion of truncation error, convergence, and stability under different model realizations \cite{bruder2021advantages,iacob2024koopman}. The third point is important in that it denies the immediate effect of $u_k$ on the state covariance evolution; that is, $\Sigma^p_{k+1 \mid k}$ is independent of $u_k$. This in turn allows various transformations (for example,  Cholesky factorization) of the covariance matrix, without complicating the effect of $u_k$ on the resulting transformed dynamics. 

\subsection{Toward the standard LQR form}
Let $L_k$ be the Cholesky (lower triangular) factor of $\Sigma^p_{k \mid k-1} = L_k L_k^\top$. Let $\ell_k$ be the half-vectorized (the nonzero elements, column by column, from left to right) of $L_k$. We denote the (invertible) mapping from the tuple $\pi_k = ( x^p_{k},\Sigma^p_{k \mid k-1})$ to the vector $\eta_k = [ x_{k}^{p\top}, \ell_{k}^\top]^\top$,
\begin{align} \label{eq:pi2eta}
\eta_k = \mathcal{M}(\pi_k).
\end{align}
We also denote the evolution of $\eta_k$ by
\begin{align*}
    \eta_{k + 1} = T_\eta(\eta_k, u_k).
\end{align*}
The following describe the well-definedness, the structure of the above reformulation, and their relationship to $T_{\pi}$.

\vskip 3mm
\begin{lemma} \label{lemma:continuity_of_T_3}
The function $\mathcal{M}$ in \eqref{eq:pi2eta} and its inverse $\mathcal{M}^{-1}$ are well defined and continuous. Moreover, $\eta_{k + 1} = T_\eta(\eta_k, u_k) = \mathcal{M}\left (T_\pi(\mathcal{M}^{-1}(\eta_k),u_k) \right)$, and $T_\eta$ is continuous almost everywhere in $\eta_k$, for any $u_k$.
\end{lemma}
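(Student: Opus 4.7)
The plan is to prove the three assertions of the lemma in sequence, relying only on Assumption~\ref{Assumption2}(ii) and Proposition~\ref{prop:bilinearity}. I would first address the bijection $\mathcal{M}$ itself. By Assumption~\ref{Assumption2}(ii), $\Sigma^p_{k \mid k-1}\succ 0$, so the Cholesky factorization $\Sigma^p_{k \mid k-1}=L_k L_k^\top$ with $L_k$ lower triangular and strictly positive diagonal exists and is unique; thus the half-vectorization $\ell_k$ is well defined, and together with the identity map on the $x^p_k$-component $\mathcal{M}$ is well defined. Continuity of the Cholesky factorization on the open cone of SPD matrices is a standard fact: the entries of $L_k$ are smooth functions (iterated square roots and ratios) of the entries of the SPD argument, and the strict positivity of the diagonal of $L_k$ prevents any singularity on this domain. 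The inverse $\mathcal{M}^{-1}$ undoes the half-vectorization (a linear reshaping) and forms the product $L_k L_k^\top$, which is polynomial in $\ell_k$; together with the identity on $x^p_k$, $\mathcal{M}^{-1}$ is therefore continuous everywhere on its domain.

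Next, the composition identity $T_\eta = \mathcal{M}\circ T_\pi(\cdot,u_k)\circ \mathcal{M}^{-1}$ is immediate by construction: given $\eta_k$, one sets $\pi_k=\mathcal{M}^{-1}(\eta_k)$, propagates by $\pi_{k+1}=T_\pi(\pi_k,u_k)$, and applies $\mathcal{M}$ to obtain $\eta_{k+1}$. For the almost-everywhere continuity of $T_\eta$, I combine the two continuity statements above with Proposition~\ref{prop:bilinearity}(ii), which supplies almost-everywhere continuity of $T_\pi$ in the elements of $\pi_k$. Since $\mathcal{M}^{-1}$ is continuous everywhere and $\mathcal{M}$ is continuous on the (SPD-valued) image of $T_\pi$, the only possible discontinuities of $T_\eta$ arise from $T_\pi\circ \mathcal{M}^{-1}$.

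The main subtlety, and the step that deserves the most care, is ensuring that the discontinuity set of $T_\pi$ (of measure zero in $\pi$-space) pulls back under $\mathcal{M}^{-1}$ to a set of measure zero in $\eta$-space. This follows because $\mathcal{M}$ (equivalently $\mathcal{M}^{-1}$) is a smooth bijection between two open subsets of Euclidean spaces of matching dimension, with nowhere-vanishing Jacobian determinant on its domain; hence it is locally bi-Lipschitz and preserves Lebesgue null sets in both directions. Consequently $T_\eta$ inherits continuity almost everywhere in $\eta_k$, completing the argument.
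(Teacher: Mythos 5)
Your proof is correct and follows essentially the same route as the paper: the identity map on the $x^p_k$-component, continuity of the Cholesky factorization on the positive-definite cone (guaranteed by Assumption~\ref{Assumption2}), continuity of the polynomial inverse $L_k\mapsto L_kL_k^\top$ together with the half-vectorization, and then composition with the almost-everywhere continuity of $T_\pi$ from Proposition~\ref{prop:bilinearity}. Your final step --- verifying that $\mathcal{M}^{-1}$, being a local diffeomorphism between open sets of equal dimension, pulls the null discontinuity set of $T_\pi$ back to a null set in $\eta$-space --- is actually more careful than the paper, which asserts this transfer implicitly.
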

\begin{proof}
In the $x^p_k$ portion, $\mathcal{M}$ is simply an identity map. 

From Assumptions~\ref{Assumption1} and \ref{Assumption2}, the matrix $\Sigma^p_{k \mid k-1}$ is positive definite \cite{reif1999stochastic}, which in turn implies that the Cholesky decomposition is continuous \cite[p.~295]{schatzman2002numerical}. The inverse of the decomposition is simply obtained by $\Sigma^p_{k \mid k-1} = L_k^\top L_k$, which is also continuous. The half-vectorizing map $L_k \mapsto \ell_k$ and its inverse are obviously continuous. The continuity $T_\eta$ follows from the above and $T_\pi$ being continuous almost everywhere in $\eta_k$.
\end{proof}

\begin{corollary} \label{corollary:blinearity_of_T_3}
For all $k$, $\eta_k \in \mathbb H \subset \R^{r_\eta}$, $\mathbb H$ is compact.
\end{corollary}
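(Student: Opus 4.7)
The plan is to deduce compactness of the range of $\eta_k$ from compactness of the range of $\pi_k$, invoking the continuity of $\mathcal{M}$ from Lemma~\ref{lemma:continuity_of_T_3}. First I would argue that $x^p_k$ stays in a compact set. By Assumption~\ref{Assumption1}(ii), the set $\mathbb{X}$ is robustly positively invariant under the noisy dynamics: for every $x \in \mathbb{X}$, $u \in \mathbb{U}$, and $w \in W$ we have $f(x,u)+w \in \mathbb{X}$, so $f(\mathbb{X}\times \mathbb{U}) \subseteq \bigcap_{w \in W}(\mathbb{X}-w)$, which is bounded. Since $x^p_0 = x_{0|-1}$ lies in $\mathbb{X}$ (being the mean of $x_0$ supported in $\mathbb{X}$) and $x^p_{k+1}=f(x^p_k,u_k)$ with $u_k \in \mathbb{U}$, induction places $x^p_k$ in a compact set $\mathbb{X}^p \subset \R^{r_x}$ for all $k$.

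Next I would show $\Sigma^p_{k|k-1}$ lies in a compact subset of the positive-definite cone. The CE Riccati recursion has the same structure as the eKF one, with $x^p_k$ replacing both $x_{k|k-1}$ and $x_{k|k}$ in the Jacobian evaluations $F^p_k, H^p_k$. Since $x^p_k \in \mathbb{X}^p$ and $u_k \in \mathbb{U}$ are confined to compact sets, the Jacobians are uniformly bounded, so the observability/boundedness hypotheses behind Assumption~\ref{Assumption2}(ii) apply verbatim to the CE recursion, yielding constants $0 < \underline{\sigma} \le \overline{\sigma} < \infty$ with $\underline{\sigma} I \preceq \Sigma^p_{k|k-1} \preceq \overline{\sigma} I$ for every $k$. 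Consequently $\Sigma^p_{k|k-1}$ lies in the compact set $\mathbb{S} = \{S = S^\top : \underline{\sigma} I \preceq S \preceq \overline{\sigma} I\}$ of symmetric matrices.

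Finally I would invoke Lemma~\ref{lemma:continuity_of_T_3}: the tuple $\pi_k = (x^p_k, \Sigma^p_{k|k-1})$ lies in the compact set $\mathbb{X}^p \times \mathbb{S}$, and the continuous image $\mathbb{H} := \mathcal{M}(\mathbb{X}^p \times \mathbb{S}) \subset \R^{r_\eta}$ is therefore compact, with $\eta_k \in \mathbb{H}$ for all $k$. The main obstacle is the transfer of Assumption~\ref{Assumption2}(ii) from the true eKF covariance to its CE surrogate: it requires that the observability/Jacobian-boundedness conditions behind that assumption remain valid once $x_{k|k-1}$ and $x_{k|k}$ are replaced by $x^p_k$, which in turn hinges on $x^p_k$ living in the compact set established in the first step.
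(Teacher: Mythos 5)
Your proposal is correct and follows essentially the same route as the paper: the paper's one-line proof also argues that $\mathbb H$ is the image of a compact set under the continuous map $\mathcal M$ from Lemma~\ref{lemma:continuity_of_T_3}, citing compactness of $\mathbb X$ from Assumption~\ref{Assumption1}. You supply the details the paper leaves implicit --- propagating $x^p_k$ inside a compact invariant set and confining $\Sigma^p_{k\mid k-1}$ to a compact slice of the positive-definite cone so that $\mathcal M$ is continuous there --- and you correctly flag the transfer of Assumption~\ref{Assumption2}(ii) to the CE surrogate as the step needing justification, a transfer the paper itself makes without comment.
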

\begin{proof}
It follows from Lemma~\ref{lemma:continuity_of_T_3} that the image under a continuous function of a compact set is compact, and $\mathbb X \ni x_k$ according to Assumption~\ref{Assumption1} is compact.
\end{proof}

Now we show that the above transformation turns the cost \eqref{eq:costWideSenseDeterministic} into the standard LQR form.
\vskip 3mm
\begin{proposition}
The cost in \eqref{eq:costWideSenseDeterministic} is equivalent to \begin{align}
     \hat J^N= \frac{1}{N} \Bigg [\eta_{N}^\top Q_{\star \star}  \eta_{N} + \sum_{k=0}^{N-1} \left [ \eta_{k}^\top Q_{\star \star} \eta_{k} + u_{k} ^\top R u_{k} \right ]\Bigg],\label{eq:costConcat}
\end{align}
where $Q_{\star \star} = \text{block-diag}(Q, Q_{\star})$, and $Q_{\star} = \text{block-diag}(Q_1, Q_2, \hdots, Q_{r_x})$, where $Q_i$ is the principal submatrix of $Q$ starting from the element $(i,i)$ of $Q$ to the end (the element $(r_x,r_x)$, for example, $Q_1=Q$)). 
\end{proposition}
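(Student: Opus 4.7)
The plan is to verify the equivalence term by term in the cost \eqref{eq:costWideSenseDeterministic}, using the block structure of $\eta_k = [x_k^{p\top}, \ell_k^\top]^\top$ together with the Cholesky factorization $\Sigma^p_{k \mid k-1} = L_k L_k^\top$. The control penalty $u_k^\top R u_k$ transfers verbatim, and the term $x_k^{p\top} Q x_k^p$ corresponds directly to the leading block of $\eta_k^\top Q_{\star\star} \eta_k$ since $Q_{\star\star}$ is block-diagonal and $x_k^p$ occupies the first block of $\eta_k$. Therefore, the heart of the matter is to prove the identity
\begin{align*}
    \trace\!\left( Q \Sigma^p_{k \mid k-1} \right) \;=\; \ell_k^\top Q_\star\, \ell_k.
\end{align*}

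First I would rewrite the left-hand side by the cyclic property of the trace as $\trace(L_k^\top Q L_k) = \sum_{j=1}^{r_x} (L_k^\top Q L_k)_{jj}$, and then expand each diagonal entry as $\sum_{a,b} L_{aj} Q_{ab} L_{bj}$. The key structural observation is that $L_k$ is lower triangular, so $L_{aj} = 0$ whenever $a < j$; hence the sum collapses to indices $a,b \geq j$. This restricts the inner double sum to the principal submatrix of $Q$ starting at row/column $j$, which is exactly $Q_j$ in the statement. Writing $\bar{L}_j \in \R^{r_x - j + 1}$ for the lower portion of the $j$-th column of $L_k$ (i.e., the nonzero entries), each diagonal term becomes $\bar{L}_j^\top Q_j \bar{L}_j$.

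Next, the half-vectorization is defined to stack exactly those columns: $\ell_k = [\bar{L}_1^\top, \bar{L}_2^\top, \ldots, \bar{L}_{r_x}^\top]^\top$. Combined with $Q_\star = \text{block-diag}(Q_1, \ldots, Q_{r_x})$, the block structure yields $\ell_k^\top Q_\star \ell_k = \sum_{j=1}^{r_x} \bar{L}_j^\top Q_j \bar{L}_j$, matching the expansion above. Summing the three identities (the trace identity, the trivial $x^{p\top} Q x^p$ identity, and the control term) over $k = 0, \ldots, N-1$ and including the terminal $k = N$ term (for which the $u$-penalty drops out, consistent with $Q_{\star\star}$ appearing alone in the terminal cost of \eqref{eq:costConcat}) produces exactly \eqref{eq:costConcat}.

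The main obstacle is entirely the bookkeeping in the second step, namely aligning the index ranges of the half-vectorization with the nested principal submatrices $Q_j$. Once the lower-triangular zero pattern of $L_k$ is exploited to restrict indices to $a, b \geq j$, the identification with $Q_j$ is forced, and the rest of the proof is simple assembly. No further assumptions beyond the well-definedness of the Cholesky factorization (guaranteed by Assumption~\ref{Assumption2} and Lemma~\ref{lemma:continuity_of_T_3}) are needed.
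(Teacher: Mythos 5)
Your proof is correct and follows the same route as the paper: apply the cyclic property of the trace to get $\trace(L_k^\top Q L_k)$, identify this with $\ell_k^\top Q_\star \ell_k$, and substitute into the cost. The only difference is that you explicitly carry out the index bookkeeping (using the lower-triangular zero pattern of $L_k$ to restrict each diagonal entry to the principal submatrix $Q_j$), a step the paper asserts without detail, so your write-up is if anything more complete.
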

\begin{proof}
The term $\trace(Q \Sigma^p_{k \mid k-1})$ in \eqref{eq:costWideSenseDeterministic}, using the cyclic property of the trace, can be written as $\trace(L_k^\top Q L_k)$, which equivalently can be described by $\trace(L_k^\top Q L_k) = \ell_k ^\top  Q_{\star} \ell_k$. By substituting $\eta_k$ in place of $x^p_k$ and $\Sigma^p_{k \mid k-1}$, \eqref{eq:costWideSenseDeterministic} turns into \eqref{eq:costConcat}.
\end{proof}

\subsection{Koopman and eDMD for control}
We follow the notation and the generalization of the Koopman operator to systems with control inputs as presented in \cite{korda2018linear}. We let $l_u$ be the space of all control sequences of the form $\{u_k\}_{k=0}^\infty=:\mathbf{u}$, and $\mathbb H \times l_u$, where $\mathbb H \subset \R^{r_\eta}$ is a compact set. The ``uncontrolled dynamics'' are
\begin{align} \label{eq:eta_u_dynamics}
    \begin{bmatrix}
    \eta_{k+1}\\
    \mathbf{u}
    \end{bmatrix}
    =:
    \chi_{k+1}
    =
    \begin{bmatrix}
    T_\eta(\eta_k, \mathbf{u}(k))\\
    q \mathbf{u}
    \end{bmatrix}
    =:
    \mathcal{T}(\chi_k),
\end{align}
where $q$ is the time-shift operator applied elementwise: $q \mathbf{u}(k) = \mathbf{u}(k+1)$, and $\mathbf{u}(k) := u_k$. 

The Koopman operator corresponding to \eqref{eq:eta_u_dynamics}, assuming its forward invariance, $\mathcal{K}: \mathcal C(\mathbb H \times l_u) \to \mathcal C(\mathbb H \times l_u)$ ($\mathcal C(\star)$ is the space of real-valued continuous functions with domain $\star$), is defined by $\mathcal{K} \phi (\chi_k) = \phi \circ \mathcal{T} (\chi_k),$ for every real-valued function $\phi$ with domain $\mathbb H \times l_u$.

Let $\Psi$ be a finite-dimensional vector of elements of $\mathcal C(\mathbb H)$, such that $\Psi = \left [\psi_1,\psi_2,\hdots,\psi_{N_{\Psi}} \right ].$ This vector is solely a function of the state $\eta_k$. We let $\Psi_k := \Psi(\eta_k)$ and construct $\Psi$ such that the first $r_\eta$ elements of $\Psi_k$ are $\eta_k$ and such that it spans a subspace of $\mathcal C(\mathbb H)$.\footnote{For technical consideration, we choose $\psi(\eta_k,a) = \psi(\eta_k,b)$ for all $a,b \in l_u$ and suppress the second argument.}

\begin{figure}
\centering
\begin{tikzpicture}[align=center, node distance= 0.93in and 0.2in] 
\scriptsize
\node (space0){$u_k$};
\node (dynamics) [bblock, right of = space0] {$T_{\pi}(\pi_k,u_k)$};
\node (pi2eta) [bblock, right of = dynamics] {$\mathcal{M}(\pi_k)$};
\node (eta2psi) [bblock, right of = pi2eta] {$\Psi(\eta_{k+1})$};
\node[label={ $T_{\eta}(\eta_k,u_k)$}] (space3) [Bblock] at ($(pi2eta)!0.57!(dynamics)$) {};
\node  (EDMD) [bblock, below of = space3, yshift=0.1in] {eDMD \\ (DMD)};
\node  (LQR) [bblock, right of = EDMD] {LQR \\ design};
\node (helper1x) at ($(space0)!0.7!(dynamics)$) {};
\node (helper2x) at ($(pi2eta)!0.3!(eta2psi)$) {};
\node (stackU) [below of = helper1x, yshift=0.5in] { $
\begin{bmatrix}
    \lvert &  & \lvert\\
    u_{{d}} & \hdots& u_0\\
    \lvert &  & \lvert
\end{bmatrix}$ \\ $ U_{data}$};
\node (stackPsi) [below of = helper2x, yshift=0.5in] { $
\begin{bmatrix}
    \lvert &  & \lvert\\
    \Psi_{d} & \hdots& \Psi_0\\
    \lvert &  & \lvert
\end{bmatrix}$ \\  $\Psi_{data}$};
\draw [arrow]  (stackU) |-  node[anchor=south] {} (EDMD);
\draw [arrow]  (EDMD) --  node[anchor=south] {} (LQR);
\draw [arrow]  (stackPsi) -|  node[anchor=south] {} (EDMD);
\draw [arrow]  (space0) --  node[anchor=south] {} (dynamics);
\draw [arrow] (dynamics) -- node[anchor=south] {$\pi_{k+1}$} (pi2eta);
\draw [arrow] (pi2eta) -- node[anchor=south] {$\eta_{k+1}$} (eta2psi);
\draw [arrow] (space0) |- node[anchor=north] { record} (stackU);
\draw [arrow] (eta2psi) |- node[anchor=north] { record} (stackPsi);
\end{tikzpicture}
\caption{\label{Fig:dataCollectionStage}\textbf{(Training Stage):} Block diagram illustrating the steps of the simulation-data collection stage, starting from a randomized initial condition $\eta_0$ and an injected, persistently exciting control sequence $\{u_k\}_{k \geq 0}$, next the creation of the data matrices $U_{data}$ and $\Psi_{data}$, then the application of eDMD, and ending with solving the LQR problem of the lifted system.}
\end{figure}
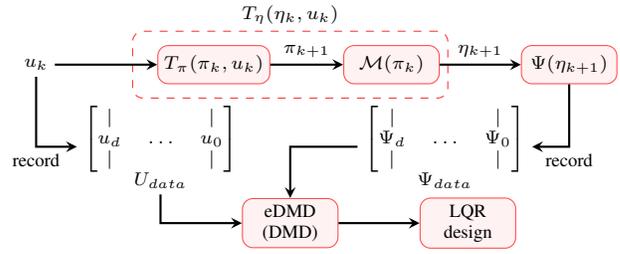

We now use the extended dynamic mode decomposition (eDMD) procedure, a simulation- or data-driven approach that seeks a finite-dimensional approximation of the Koopman operator \cite{korda2018linear}. After the data collection stage, as illustrated in Figure~\ref{Fig:dataCollectionStage}, the eDMD approximation of a linear representation of the Koopman operator can be found through solving the following optimization problem,
\begin{align} \label{eq:linearOptimization}
    \min_{ A,\, B} \quad \lVert \Psi_{data}^+ -  A \Psi_{data}^- -  B U_{data} \rVert_{F},
\end{align}
where $\lVert \cdot \rVert_{F}$ is the Frobenius norm, $\Psi_{data}^+ = [\Psi_{d}, \hdots, \Psi_1]$, and $\Psi_{data}^- = [\Psi_{d-1}, \hdots, \Psi_0]$. The unique solution to \eqref{eq:linearOptimization} under the full column rank condition of the data is $[ A,\, B] = \Psi_{data}^+    \begin{bmatrix}
        \Psi_{data}^-\\
        U_{data}
    \end{bmatrix}
    ^\dagger.
$ Since $\eta_k$ forms the first entries of $\Psi_k$, it can be recovered by the canonical projection $ C = [\mathbb I_{r_\eta \times N_\Psi}\,\, 0]$. The predictor is now given by 
\begin{align}
    \Psi_{k+1} =  A \Psi_k +  B u_k,\quad \eta_k =  C \Psi_k. \nonumber
\end{align}
(For the quality of this approximation see \cite{haseli2023invariance,iacob2024koopman}.)

The cost function \eqref{eq:costConcat} can now be described by
\begin{align}
\hat J^N = \frac{1}{N} \Bigg [\Psi_{N}^\top \mathcal{Q}  \Psi_{N} + \sum_{k=0}^{N-1} \left [ \Psi_{k}^\top \mathcal Q \Psi_{k} + u_{k} ^\top R u_{k} \right ]\Bigg], \label{cost_Psi}
\end{align}
where $\mathcal{Q} = \text{block-diag}(Q_{\star \star},\,0_{N_\Psi-r_\eta \times N_\Psi-r_\eta}) \succeq 0$. We assume $( A,\, B)$ to be stabilizable and $( A,\,\mathcal{Q}^{\frac{1}{2}})$ detectable (see  \cite{mamakoukas2023learning} for imposing these assumptions). Hence, $J^N \to a < \infty$ as $N \to \infty$, for a stabilizing feedback law. We denote the resulting LQR control (of the lifted-dynamics) by $K_{\Psi}$. Figure~\ref{Fig:closed-loopStage} illustrates the implementation phase of the control law $K_\Psi$ in closed-loop with the original system. The difference in the used variables between Figures~\ref{Fig:dataCollectionStage} and \ref{Fig:closed-loopStage} is explained in Remark~\ref{remark1}.

\begin{figure}
\centering
\begin{tikzpicture}[align=center, node distance = 0.93in and 0.2in] \scriptsize

\node (dynamics) [bblock] {state-space\\model \eqref{eq:stateSpace}\\ ``plant''};
\node (eKF) [bblock, right of = dynamics] {eKF \\\eqref{eq:eKF_state}};
\node (lag) [bblock, right of = eKF] {lag: $z^{-1}$};
\node (pi2eta) [bblock, below of = lag, yshift=0.5
in] {$\mathcal{M}(\pi_{k}')$};
\node (eta2Psi) [bblock, left of = pi2eta] {$\Psi(\eta_{k}')$};
\node (LQR_gain) [bblock, left of = eta2Psi] {$K_{\Psi} \times \cdot$};

\draw [arrow] (dynamics) -- node[anchor=south] {$y_k$} (eKF);
\draw [arrow] (eKF) -- node[anchor=south] {$\pi_{k+1}'$} (lag);
\draw [arrow] (lag) -- node[anchor=west] {$\pi_k'=$ \\$(x_{k\mid k-1},\Sigma_{k \mid k-1})$} (pi2eta);
\draw [arrow]  (pi2eta) --  node[anchor=south] {$\eta_{k'}$} (eta2Psi);
\draw [arrow]  (eta2Psi) --  node[anchor=south] {$\Psi_{k}'$} (LQR_gain);
\draw [arrow] (LQR_gain) --  node[anchor=east] {$u_{k}$} (dynamics);
\end{tikzpicture}
\caption{\label{Fig:closed-loopStage}\add{\textbf{(Implementation stage):} A block-diagram illustrating the online implementation of the SOC-LQR control of the lifted system in closed-loop over the original system ``plant'' \eqref{eq:stateSpace}. The variables with a prime are functions of $y_k$, since the CE step \eqref{eq:CE_step} is omitted in the implementation phase, according to Remark~\ref{remark1}.}}
\end{figure}
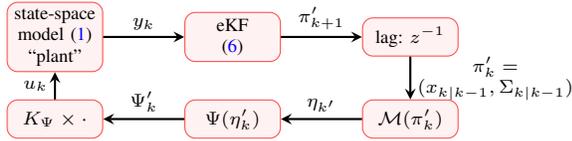

\section{Numerical Example} \label{Section: Numerical Examples}
In this section we implement our linearized SOC approach on a system with varying state observability over the state-space. Suppose we have the system
\begin{align}
x_{k+1} &= 
     \begin{bmatrix}
         .63 &.54 & 0\\
         .74 &.96 & .68\\
         .1 & -.86 & .54     \end{bmatrix}
     x_k + 
     \begin{bmatrix}
         0\\
         1\\
         0
     \end{bmatrix}
     u_k + w_k, \label{eq:stateDynamicsExample}
     \\
    y_k &= \text{\small ELU}\left (\sum_{i=1}^3x_k^i-3\right ) + v_k, \nonumber\label{eq:outputDynamicsExample}
\end{align}
where $x_k=(x_k^1,x_k^2,x_k^3)^\top$, $\text{\small ELU}$ is the exponential linear activation function, widely used in deep learning applications, which is $\text{\small ELU}(x)=x$ for $x\geq0$ and plateaus toward $\text{\small ELU}(x)=-1$ for $x \ll 0$ (in particular, it is $e^x-1,\,x<0$). The processes $w_k$ and $v_k$ follow the same assumptions as for \eqref{eq:stateSpace} in Section~\ref{section: ProblemFormulation}. \add{Furthermore, \MPar{4.1}$w_k \sim \mathcal{N}^3(0,0.2 \mathbb{I}_{3 \times 3})$ ($\mathcal{N}^{\text{trunc}}(\mu, \Sigma)$ is a Gaussian of zero mean and covariance $\Sigma$, truncated beyond the Mahalanobis distance $\text{MD}\geq \text{trunc}$ about the mean, then renormalized\footnote{\add{The corresponding Mahalanobis distance is given by $\text{MD}(x)=\sqrt{(x-\mu)^\top \Sigma^{-1} (x-\mu)}$. The set $\{x\in \R \mid \text{MD}(x)\leq 2\}$ corresponds to a $\approx 95\%$ confidence interval for a univariate normal. For a multivariate normal on $\R^r$, $\text{MD}^2(x)$ is the Chi-squared density with $r$ degrees of freedom, since it is a sum of squared $r$ independent and standardized ($\sqrt{\Sigma^{-1}}(x-\mu)$) Gaussian random variables. When $r=3$, the set $\{x \in \R^3 \mid \text{MD} \leq 3\}$ corresponds to a $\approx 97\%$ confidence region (ellipsoid centered at $\mu$). Truncation and renormalization are done implicitly via rejection sampling; that is,  sample $x$ is rejected if $\text{MD}(x) > \text{trunc}$, and sampling is repeated.}}$^,$\footnote{\add{The covariance after truncation can still be approximated by $\Sigma$ when the value trunc chosen represents a high percentage confidence region.}}), $v_k \sim \mathcal{N}^2(0,0.2)$ and $x_0 \sim \mathcal{N}^3(0,\mathbb{I}_{3 \times 3})$. These truncations are required to satisfy the boundedness in Assumption~\ref{Assumption1}.}

This model belongs to an emerging class of models in system identification: the Hammerstein-Wiener family, which consists of linear systems composed in series with algebraic nonlinearities \cite{wills2013identification}. \add{We\MPar{4.2} pick this example because: (i) the corresponding SOC problem, with $9$-dimensional state-space, whether through nonconvex MPC or DP, can be cumbersome or even prohibitive, and (ii) it splits the state space into two half-spaces separated by the surface $\sum x^i - 3 = 0$, with one half-space on which the system has significantly higher observability than the other. Hence, the behavior of the dual control derived is immediately interpretable.}
 
\add{Notice\MPar{4.3} that, since $\text{ELU}(x)=x$ for $x\geq 0$, if $\sum_{i=1}^3x_k^i \geq 3$, $y_k$ has sensitivity w.r.t. $x_k$, while if $\sum_{i=1}^3x_k^i \ll 3$, this sensitivity vanishes (compared with the constant $y_k$'s sensitivity to $v_k$, i.e., $\partial y_k / \partial v_k = 1$). This difference in sensitivity between the two half-spaces ($\mathcal H_1 = \{\sum_{i=1}^3x_k^i < 3\}$ and $\mathcal H_2 = \{ \sum_{i=1}^3x_k^i \geq 3\}$) affects the observability of $x_k$ in the complete stochastic observability sense defined in \cite{liu2011stochastic}. In particular, for any value $\sum_{i=1}^3x_k^i \ll 3$, $y_k \approx -1 + v_k$, $x_k$ cannot be identified.\footnote{Different definitions of observability for nonlinear systems exist. For example, the one provided by \cite{vaidya2007observability} (Def.~12: Degree of Observability)  is related to the magnitude (power) of $y_k$ itself rather than the ability to infer $x_k$ from it. For $\sum_{i=1}^3x_k^i \ll 3$, this definition returns a nonzero degree of observability, since $y_k \approx -1 + v_k \neq 0$ almost surely, contradicting the lack of identifiability in such regions of the state space. Therefore, the definition of the complete stochastic observability in \cite{liu2011stochastic} is more aligned with our intention. The two definitions align in the linear case where the power of $y_k$ is captured in the observability Gramian, which also implies the identifiability of the state.} Therefore, the quality of estimation varies over the state space. For better observability, it is required to ``kick'' the state to the half-space $\mathcal{H}_2$, opposing stabilization, which, instead, requires the state to stay close to zero, contained in the opposite half-space $\mathcal H_1$.}

In the training stage, and according to Figure~\ref{Fig:dataCollectionStage}, we inject a random white noise input $u_k \sim \mathcal{N}^2(0,0.2)$. We follow Figure~\ref{Fig:dataCollectionStage} for data collection and then use a simple linear DMD model of the state $\Psi_k=[\eta_k^\top,1]^\top$.

For the control design, we pick $Q=\mathbb I_{3 \times 3}$ ($\mathcal Q$ can be found accordingly) and $R = 1$, then find the LQR control law $K_{\Psi}$ in \eqref{cost_Psi} (as $N \to \infty$). We apply $K_\Psi$ in a closed-loop simulation as explained in Figure~\ref{Fig:closed-loopStage}. We also calculate the LQR control law $K$ of the deterministic version of the system \eqref{eq:stateDynamicsExample} ($w_k=0$ and $y_k=x_k$, i.e., the CE assumption). The comparison between the performance of these two controllers is shown in Figure~\ref{fig:closed_loop_sim} and Table~1, after implementing each control design (CE-LQR: $K x_{k \mid k-1}$ and SOC-LQR: $K_\Psi \Psi_k'$, with $\Psi_k'$ as in Figure~\ref{Fig:closed-loopStage}). SOC-LQR achieves a better control cost and a lower estimation error. That is, SOC-LQR not only  improves the control performance but also significantly increases the estimation quality (the eKF performance) $\epsilon := \sum_k \lVert x_{k \mid k-1} - x_k^{true} \rVert^2_2$ \footnote{These results can be reproduced by using our open-source \textsc{Julia} code found at \href{https://github.com/msramada/linearizing-uncertainty-for-control}{github.com/msramada/linearizing-uncertainty-for-control}.}. This can also be seen in Figure~\ref{fig:compared_to_true}, where the true state (known in simulation) is wandering around for CE-LQR, while it is estimated well and thus controlled well with SOC-LQR.

\begin{figure}
\centering 
\includegraphics[width=3.2in,height=1.7in]{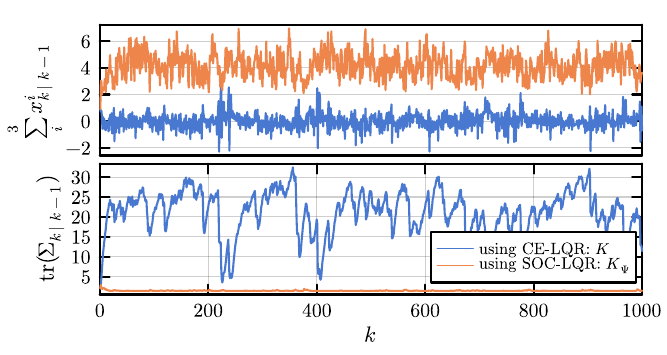} 
\caption{\add{The SOC-LQR pushes the state to $\mathcal{H}_2$ (top), resulting in significantly lower estimation covariance (bottom).}}\label{fig:closed_loop_sim}
\end{figure}

\begin{figure}
\centering 
\includegraphics[width=3.2in,height=0.85in]{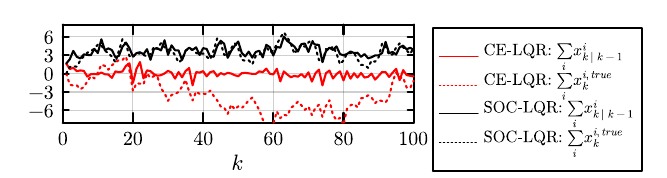} 
\caption{\add{SOC-LQR accurately controls and estimates the true state, whereas CE-LQR assumes the state is around zero, while estimation error is huge and true state is wandering.}}\label{fig:compared_to_true}
\end{figure}

\begin{table}[h]
\begin{center}
\footnotesize
\begin{tabular}{ |c||c|c|c| } 
 \hline
 \multicolumn{4}{|c|}{Table 1: Control performance and estimation quality: $K$ vs. $K_\Psi$.} \\
 \hline
 Metric (time-averaged)& CE-LQR: $K$ & SOC-LQR: $K_\Psi$ & reduction\\
 \hline
 Achieved cost   & $22$    &$1.7$ & $92\%$\\
 $\epsilon$ & $28.9$  & $0.52$ & $98\%$\\
 \hline
\end{tabular}
\end{center}
\end{table}
\section{Conclusion}
The proposed approach requires the eKF to be an accurate Bayesian filter for the problem at hand. For systems with more complex uncertainty descriptions, an alternative filter might be required. Our methodology is also contingent, in its linearization part, on the choice of the dictionary functions used and, hence, on the prediction accuracy of the computed approximate Koopman operator. Therefore, the future of SOC will benefit immediately from advancements in the Koopman operator theory for control.

We seek further investigations toward exploiting the algebraic structure of the eKF and incorporating deep embedding approaches \cite{tiwari2023computationally} to address large-scale SOC problems.

\section*{Acknowledgment}
This material was based upon work
supported by the U.S. Department of Energy, Office of Science,
Office of Advanced Scientific Computing Research (ASCR) under
Contract DE-AC02-06CH11347. 

\bibliographystyle{IEEEtran}
%\bibliography{IEEEabrv,mybibfile} %\bibliographystyle{ieeetr}        % Include this if you use bibtex 
\bibliography{References}

\vspace{0.1cm}
\begin{flushright}
	\scriptsize \framebox{\parbox{2.5in}{Government License: The
			submitted manuscript has been created by UChicago Argonne,
			LLC, Operator of Argonne National Laboratory (``Argonne").
			Argonne, a U.S. Department of Energy Office of Science
			laboratory, is operated under Contract
			No. DE-AC02-06CH11357.  The U.S. Government retains for
			itself, and others acting on its behalf, a paid-up
			nonexclusive, irrevocable worldwide license in said
			article to reproduce, prepare derivative works, distribute
			copies to the public, and perform publicly and display
			publicly, by or on behalf of the Government. The Department of Energy will provide public access to these results of federally sponsored research in accordance with the DOE Public Access Plan. http://energy.gov/downloads/doe-public-access-plan. }}
	\normalsize
\end{flushright}	

\end{document}